 \documentclass[final,12pt]{clear2025} 

\usepackage[format=plain]{caption}
\usepackage{graphicx}

\usepackage[utf8]{inputenc} 
\usepackage[T1]{fontenc}    
\usepackage{hyperref}       
\usepackage{url}            
\usepackage{booktabs}       
\usepackage{amsfonts}       
\usepackage{amsmath} 
\usepackage{nicefrac}       
\usepackage{microtype}      
\usepackage{xcolor}         
\usepackage[english]{babel}
\usepackage{times}
\newtheorem{thm}{Theorem}
\newtheorem{defn}{Definition}

\title[Local Interference: Removing Interference Bias in Semi-Parametric Causal Models]{Local Interference: Removing Interference Bias in Semi-Parametric Causal Models}

\clearauthor{%
 \Name{Michael O'Riordan} \Email{moriordan@spotify.com}\\
 \addr Spotify
 \AND
 \Name{Ciar\'{a}n M. Gilligan-Lee} \Email{ciaran.lee@ucl.ac.uk}\\
 \addr Spotify and University College London%
}

\begin{document}

\maketitle

\begin{abstract}
Interference bias is a major impediment to identifying causal effects in real-world settings. For example, vaccination reduces the transmission of a virus in a population such that everyone benefits---even those who are not treated. This is a source of bias that must be accounted for if one wants to learn the true effect of a vaccine on an individual's immune system. Previous approaches addressing interference bias require strong domain knowledge in the form of a graphical interaction network fully describing interference between units. Moreover, they place additional constraints on the form the interference can take, such as restricting to linear outcome models, and assuming that interference experienced by a unit does not depend on the unit’s covariates. Our work addresses these shortcomings. We first provide and justify a novel definition of causal models with \emph{local interference}. We prove that the True Average Causal Effect, a measure of causality where interference has been removed, can be identified in certain semi-parametric models satisfying this definition. These models allow for non-linearity, and also for interference to depend on a unit's covariates. An analytic estimand for the True Average Causal Effect is given in such settings. We further prove that the True Average Causal Effect cannot be identified in arbitrary models with local interference, showing that identification requires semi-parametric assumptions. Finally, we provide an empirical validation of our method on both simulated and real-world datasets.
\end{abstract}

\section{Introduction}

Medical professionals and epidemiologists want to understand the effects of vaccination. Advertisers want to estimate the impact of campaigns in online marketplaces. Teachers want to understand how their instruction influences student's test scores. In all of these cases, straightforward application of causal inference techniques will lead to incorrect estimates of the causal effects due to \emph{interference bias}. Interference bias arises when the treatment assignment of one unit can impact the outcome of another. Indeed, vaccination reduces the transmission of a virus in a population, such that everyone benefits---even those not vaccinated; advertisements compete in online marketplaces, reducing the impact of certain campaigns; and student's test scores are not only influenced by their instruction type, but also by the instruction type of their class-mates.

The restriction that there be no interference between units is a crucial part of the stable unit-treatment value assumption (SUTVA), generally required to identify causal effects \citep{rubin1978bayesian}. Therefore, the presence of interference, and consequent violation of SUTVA, is a major impediment to using causal inference to address real-world problems. 
Recent work by \cite{zhang2022causal,spohn2023graphical,zhang2023causal, o2024spillover} has begun exploring this problem in the graphical models framework of \cite{Pearl2009}. 
However, these approaches require strong domain knowledge---in the form of a graphical interaction network fully describing interference between units---to remove bias. 
Moreover, additional constraints are placed on the form interference can take, such as restricting to linear outcome models \citep{zhang2022causal,spohn2023graphical}, 
and assuming that interference does not depend on a given unit's covariates [\citealt{spohn2023graphical}]. 
To remove interference bias and identify the true causal effect 
(we formally define this quantity in Section~\ref{section: identification of TACE} below), 
these works rely on full knowledge of the interaction network, and also require the existence of units not impacted by interference. These conditions are not always feasible in practice. 

In this paper we address these shortcomings. Our approach is based on two observations.
Firstly, those units who are not themselves treated---that is, units that are only impacted by \emph{spillover effects} from treated units---provide a sense of the level of interference present. If interference experienced by the untreated units is representative of that experienced by the treated units, then in principle one could use spillover effects on untreated units to remove, or reduce, interference bias.

Secondly, even in situations where the interaction network is not fully known, there are real-world settings where the problem can be simplified. Indeed, in the case of vaccination, while it’s possible that the vaccination status of one member of the population can impact the outcome for any other (someone could get vaccinated in London and get on a plane to New York soon after), it is reasonable to assume that a member of the population is primarily impacted by the vaccination status of the members of their immediate household. In the case of advertisements in an online marketplace, a given advertisement will usually only compete with other advertisements that are relevant for the same buyer in that marketplace and not others. Finally, student's test scores will likely only be influenced by their close friends in the class, rather than all class-mates. Hence, even when the interaction network is not fully known, certain domain knowledge about how interference is ``localised'' can make the problem tractable, and allow us to understand if treated units are exposed to the same type of interference as untreated units. 

We show that, by formalising the above two observations, one can remove interference bias in certain semi-parametric settings---beyond linear outcome models---where domain knowledge tells us that the interference can be considered \emph{local} (as in the examples of vaccination, online advertisements, and student test scores above). The main contributions of this paper are as follows:
\begin{enumerate}
    \item A novel definition of structural causal models with \emph{local interference}. In these models interference effects can be confounded, and can depend on unit's covariates. 
    \item A proof that the True Average Causal Effect (TACE), a measure of causality where local interference bias has been removed, can be identified in certain semi-parametric models. An analytic estimand for the TACE is given for such models.
    \item A proof that the TACE cannot be identified in arbitrary models with local interference---hence semi-parametric assumptions are required for identification.
    \item An empirical validation of our method on both synthetic and real-world datasets.
\end{enumerate}


\section{Toy Examples}\label{section: toy examples}

\begin{figure}[t]
\centering
\subfigure[]{
   \includegraphics[width=0.4\textwidth]{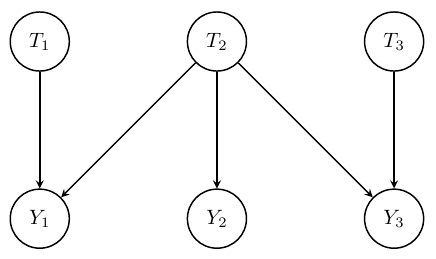}
    }
    \qquad
\subfigure[]{
   \includegraphics[width=0.4\textwidth]{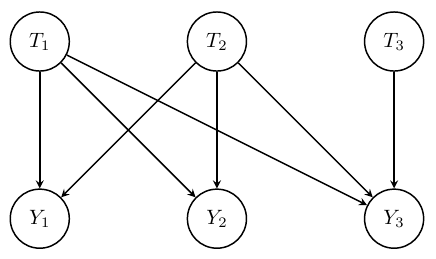}
}
       \caption{(a) Causal structure for toy example 1 described in Section~\ref{section: toy examples}. The spillover effect from unit 2 to unit 3 is used to remove the interference bias and identify the true effect for unit 1. (b) Causal structure for toy example 2 from Section~\ref{section: toy examples}. In this case, the true effect for unit 1 cannot be identified.}
       \label{fig:toy_example_1}
\end{figure}

We now introduce two toy examples which will help build intuition about when one might and might not be able to use spillover effects to remove interference bias. 

Consider the causal model, with causal structure depicted in Figure~\ref{fig:toy_example_1}(a), and structural equations given by $Y_1 = \alpha T_1 + \beta T_2$, $Y_2 = \gamma T_2$, and $Y_3 = \rho T_3 + \beta T_2$. Moreover, assume we have observed that $T_1 = 1$, $T_2 = 1$, and $T_3=0$. That is, units $1$ and $2$ are treated, and unit $3$ is untreated. Our goal is to learn the treatment effect of $T_1$ on $Y_1$ (ie. to identify $\alpha$). Here, units $1$ and $3$ experience interference from unit $2$'s treatment, in the sense that their outcomes $Y_1, Y_3$ depend on $T_2$. There are spillover effects from unit $2$'s treatment to the outcomes of units $1$ and $3$. However, as unit $3$ is untreated, observing $Y_3$ yields the spillover effect directly and gives us $\beta$. Hence, subtracting this from $Y_1$ provides us with $\alpha$: $Y_1 - Y_3 = \alpha + \beta - \beta= \alpha$. We have thus removed the interference bias to learn the true effect of treatment on unit $1$ alone using the spillover effect on unit $3$. This required that both unit $1$ and unit $3$ only experienced interference from unit $2$, and moreover that they both experienced the same \emph{level} of this interference---that is both depended on $T_2$ through $\beta$. 

The expression used to obtain $\alpha$ is quite interesting, and worth considering in more detail. We had domain knowledge informing us that unit $1$ and unit $3$ experienced the same type and degree of interference, and thus could consider the untreated unit $3$ a ``match'' for the treated unit $1$---in a similar manner to matching on confounders in propensity-based causal inference. In fact, this observation foreshadows our approach to removing interference bias provided in Section~\ref{section: identification of TACE}.

In our next toy example we will see that these assumptions are crucial for removing interference bias. Consider the causal model, with causal structure depicted in Figure~\ref{fig:toy_example_1}(b), and structural equations given by $Y_1 = \alpha T_1 + \beta T_2$, $Y_2 = \gamma T_2 + \delta T_1$, and $Y_3 = \rho T_3 + \beta T_2 + \delta T_1$. As before, assume we have observed that $T_1 = 1$, $T_2 = 1$, and $T_3=0$. Again, our goal is to identify $\alpha$. Here however, it is not possible to use spillover effects to do this. Even though unit $1$ and unit $3$ are impacted by interference from unit $2$ to the same degree (both through $T_2$ via the coefficient $\beta$), unit $3$ is also impacted by interference from unit $1$, which means the type of interference experienced by unit $3$ is no longer representative of the interference experienced by unit $1$ (or unit $2$ for that matter).

In Section~\ref{section: local interferencee} we provide and justify a novel definition of causal models with \emph{local interference}, and prove that interference bias can be removed in certain semi-parametric models satisfying this definition using spillover effects from untreated units. First, we formally overview the structural causal model and interaction network framework that we work in for the rest of the paper. 

\section{Preliminaries: Interaction Networks and the True Average Causal Effect}
\label{section: prelim}

We adopt the Structural Causal Model (SCM) framework introduced by \cite{Pearl2009}. 

\begin{defn}[Structural Causal Model] \label{functional causal model}
\label{scmdef}
A structural causal model (SCM) specifies a set of latent variables $U=\{u_1,\dots,u_n\}$ distributed as $P(U)$, a set of observed variables $X=\{X_1,\dots, X_n\}$, a directed acyclic graph (DAG) $G$, called the \emph{causal structure} of the model, whose nodes are the variables $U\cup X$, a collection of functions $F=\{f_1,\dots, f_n\}$, such that $X_i = f_i(\text{PA}(X_i), u_i), \text{ for } i=1,\dots, n,$ where $\text{PA}$ denotes the parent observed nodes of an observed variable.
\end{defn}

The latent noise term in each $f_i$ can be suppressed into $\text{PA}(X_i)$ by enforcing that every observed node has an independent latent variable as a parent in $G$. This convention is adopted in this work. 

A (hard) intervention on variable $X_{i}$ is denoted by $\text{do}(X_{i}=x_{i})$, which corresponds to removing all incoming edges in the causal graph and replacing its structural equation with a constant.  

We work in the interacting models framework of \cite{zhang2022causal} (see also \cite{zhang2023thesis}). We refer to the variables in a standard causal model as \emph{generic variables}. An \emph{explicit variable}, on the other hand, represents a variable corresponding to a specific unit only. For instance, ``treatment ($T$)'' is a generic variable, but ``the treatment assignment of unit $i$ ($T_i$)'' is explicit.

\begin{defn}[Interaction model]
An \emph{interaction model}, $M(G, S)$, is a causal model where $G$ is the \emph{interaction network} and $S$ is the set of structural equations defining the data generating process of the observed explicit variables. An interaction network, $G$, is a DAG with each node representing an explicit variable and each directed edge $A_i \rightarrow B_j$  representing $A_i$ causes $B_j$
\end{defn}

An example of an interaction network is given by Figure~\ref{fig:toy_example_1}(a), with structural equations specified in Section~\ref{section: toy examples}. Interaction networks allow arrows between explicit variables of the same unit ($T_1\rightarrow Y_1$ in Figure~\ref{fig:toy_example_1}(a)), as well as between explicit variables of different units ($T_2\rightarrow Y_1$ in Figure~\ref{fig:toy_example_1}(a)).

We now define the \emph{default interaction model}, which is the ``default'' causal model for a unit if there are no interactions with other units. Our \emph{default} interaction model extends the \emph{isolated} interaction model of \cite{zhang2022causal}, allowing for non-linear interactions between units in the structural equations.
Interacting units are removed by replacing the variables in the interacting terms in the structural equations with the appropriate \emph{default value}. In linear models for instance, removing interference corresponds to removing the terms that lead to interactions between units, and so default values are always $0$ in this case. Concretely, recall the structural equations for unit $1$ in the second toy example from Section~\ref{section: toy examples}: $Y_1=\alpha T_1 + \beta T_2$. The default model for unit $1$ then corresponds to setting $T_2=0$. The structural equation in the default model for unit $1$ in this toy example is: $Y_1=\alpha T_1$. In non-linear models, default values can be any real number that the variable can take on in order to remove interference. For example, if the structural equations are a product of variables from different units, then to remove interference the default value for interacting variables must be $1$.

\begin{defn}[Default interaction model]
A default interaction model $DM(DG, DS)$ with respect to an interaction model $M(G, S)$ with a set of default generic variable values $\text{D}$ is constructed from $M$ in the following way:
\begin{enumerate}
\item $DG = G'$ where $G'$ is the graph obtained from $G$ by removing all edges between units $i\neq j$.
\item $DS = S'$ where $S'$ is the set of equations obtained from $S$ by substituting in each equation $X_i=f_i(\text{PA}(X_i))$ any explicit variable $\forall j \neq i$ with the corresponding constant in $\text{D}$. 
\end{enumerate}
The unit default interaction model $DM_i(DG_i,DS_i)$ corresponds to the subgraph and subset of equations in $DM(DG,DS)$ specific to unit $i$.
\end{defn}

In order to define causal quantities that are the same for all units, we need to introduce a notion of symmetry in interaction models. To this end, we define \emph{balanced interaction models}. The intuition underpinning this definition is that if we hypothetically remove all interactions, then all units should have the same data generating process---as in standard causal inference.

\begin{defn}[Balanced interaction model]\label{def: balanced}
An interaction model $M$ is balanced with default values $\text{D}$ if the unit default interaction model $DM_i(DG_i,DS_i)$ is identical for each unit.
\end{defn}

Finally, we define the causal quantity of interest in this work: the True Average Causal Effect (TACE). \citet{zhang2022causal} introduced the TACE as a generalization of the usual average causal effect to the non-iid setting. The TACE is similar to the \emph{direct effect} defined by \citet{hudgens2008toward} in the potential outcomes framework. We work in the interacting models framework and so use the TACE terminology for consistency with \citet{zhang2022causal}. This terminology also avoids potential confusion with the notion of direct vs indirect effects in causal graphs with only generic variables, where the path $T\rightarrow Y$ is often referred to as the direct effect, and $T\rightarrow M\rightarrow Y$ as an indirect effect mediated by $M$.

\begin{defn}[True Average Causal Effect]
Let $M$ be a balanced interaction model with default values $\text{D}$. The True Average Causal Effect of binary treatment $T$ on outcome $Y$, denoted as \emph{TACE}$_{TY}$, is the Average Causal Effect of $T$ on $Y$ in the unit default model of $M$ with default values $\text{D}$. That is \emph{TACE}$_{TY}$ corresponds to $\mathbb{E}\left(Y \mid \text{do}(T=1)\right) - \mathbb{E}\left(Y \mid \text{do}(T=0)\right)$ in $DM(DG, DS)$.  
\end{defn}

\section{Local Interference} \label{section: local interferencee}

Consider the interaction model depicted in Figure~\ref{fig:local_int}. The outcome for unit $i$, $Y_i$, depends on the treatment status, $T_i$, and covariates, $X_i$, of unit $i$, but also the treatment status $T_j$ and covariates $X_j$ of the $N-1$ other units where $j\neq i$. For the rest of this paper, we assume that all relevant covariates are observed, and that there are no unobserved confounding factors. Denote the set of all $N-1$ treatments $T_j$ for $j \neq i:=\{T_1,\dots,T_{i-1},T_{i+1},\dots,T_N\}$ by $T_{j\neq i}$, and the set of all $N-1$ covariates $X_j$ for $j \neq i:=\{X_1,\dots,X_{i-1},X_{i+1},\dots,X_N\}$ by $X_{j\neq i}$. We can then write $Y_i=f_i(T_i,X_i, T_{j\neq i}, X_{j\neq i})$, where we have suppressed the latent noise terms.

\begin{figure}[t] 
\centering
   \includegraphics[scale=0.9]{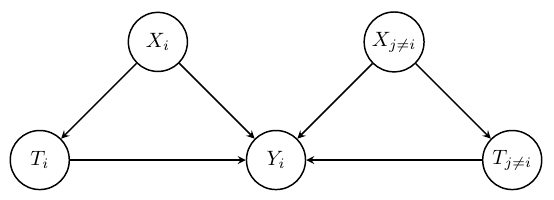}
       \caption{General causal structure with interference between units. Nodes and arrows from different units are compressed for ease of representation. The latent noise terms are also not shown.
       }\label{fig:local_int}
\end{figure}

Without loss of generality, we assume there exist functions $h_i$ and $q_i$ such that we can write $Y_i=h_i(T_i,X_i,q_i(T_{j\neq i},X_{j\neq i}, T_i, X_i))$. There is no loss of generality here as $T_i$ and $X_i$ are inputs to both $h_i$ and $q_i$. We can think of the function $q_i$ as governing the degree of interference experienced by unit $i$. With this re-framing, we can modify the causal structure to the DAG depicted in Figure~\ref{fig:local_int_2}, such that the interference experienced by unit $i$ is mediated by a variable $I_i$, where $Y_i=h_i(T_i,X_i,I_i)$. 

\begin{figure}[t] 
\centering
   \includegraphics[scale=0.9]{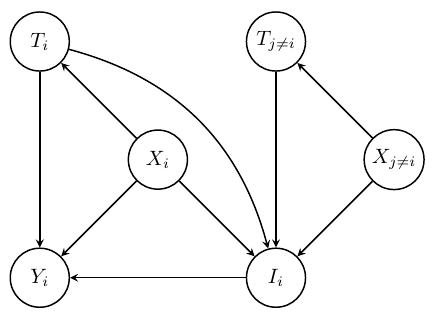}
      \caption{Graphical reduction of interference to unit-specific information. 
       }\label{fig:local_int_2}
\end{figure}

In order for our interaction model to be balanced in the sense of Definition~\ref{def: balanced}, we require $Y_i=h(T_i,X_i, I_i)$, where $h$ has no index. That is, we need the set of variables $\{Y_i,T_i,X_i,I_i\}$ and equations between them to be the same for each unit. Note that, while the set of variables $\{Y_i,T_i,X_i,I_i\}$ is the same for all units, the values those variables take need not be the same.

To have a chance at using spillover effects on the untreated units to remove interference bias on the treated units and isolate the causal effect of $T_i$ on $Y_i$, we must impose an overlap assumption: $0<P(T_i|X_i,I_i)<1$. This is an extension of the standard overlap assumption used in propensity-based causal inference to ensure that not only are there similar units in terms of covariate values between the treated and untreated groups, they are also similar in terms of the interference they experience. Such an overlap condition holds in the first toy example from Section~\ref{section: toy examples}. But, as demonstrated in the second toy example, this can fail to hold even in the simple linear case, with no confounders and constant interference effects where the effect of $T_i$ is the same across units $j\neq i$. Concretely, in the first toy example units 1 and 3 both experience interference level $I=\beta$, with treated unit 1 and untreated unit 3, and so we have that $P(T=1\mid I=\beta)=0.5$.
Whereas in the second toy example only treated unit 1 experiences interference level $I=\beta$, and there is no corresponding match in the untreated group. This violates the overlap inequality with $P(T=1 \mid I=\beta)=1$.

Overlap will not hold in all models and settings. To make the problem more tractable, we introduce the notion of \emph{local interference}, where the basic idea is that in some real-world settings the interference experienced by unit $i$ is likely determined by a subset of the $N-1$ other units, and that many units should share the same level and type of interference. This is the case in the first toy example, where units $2$ and $3$ don’t experience interference from unit $1$, and the interference experienced by those units is the same. In situations where this notion is valid, we refer to the mediator $I_i$ as the \emph{local interference signature}. 
We can think of the interference signature as a key that selects the correct ``type'' of interference for each unit. 
Note that two units can have the same interference signature $I_i$, but still experience different levels of interference depending on their covariate values.

In practice, the assumption of \emph{local interference} means that domain knowledge should give us a simplified or approximate model of interference such that the resulting interference signature has reduced cardinality relative to all \emph{a-priori} possible $I_i$. For instance, by constraining the relevant interfering units to be those a certain distance away, or, as in the vaccine example from the introduction, to be from the same household, and so on. An extreme example would be where the domain knowledge dictates that the interference experienced by a unit is determined by the total number of people in their household, and so in this case the interference for unit $i$ would be determined by unit $i$’s covariates alone without having to look at other units. In this extreme example of local interference, the overlap requirement reduces to the usual assumption $0<P(T_i|X_i)<1$.

Using domain knowledge and feature engineering to specify $I_i$ for a given problem setting, then, is much like using domain knowledge and feature engineering to specify relevant confounders. We discuss some examples and their corresponding $I_i$'s in Section~\ref{section: specifying the local interference signature}. In some cases, it is straightforward to specify $I_i$. In others, it may not be possible---just as there are real-world situations where confounders are hidden and can't be adjusted for. 

We now define models with \emph{local interference} where this is possible and where overlap holds.

\begin{defn}[Interaction model with local interference]\label{Def: local}
An \textit{interaction model with local interference} is a balanced interaction model with interaction network given in Figure~\ref{fig:local_int_2}, structural equations for each unit $i$ given by 
\begin{equation}
Y_i=h(T_i,X_i,I_i,\sigma_{Y_i}), \quad T_i=l(X_i, \sigma_{T_i}),\quad X_i=m(\sigma_{X_i}), \quad I_i=q(T_i, X_i, T_{j\neq i},X_{j\neq i},\sigma_{I_i}), 
\end{equation}
where the latent noise terms $\sigma$ are all drawn i.i.d., 
and the following overlap condition is satisfied: 
\begin{equation}
0<P(T|X,I)<1.
\end{equation}
\end{defn}

As discussed at the start of this Section, the structural equations in the above Definition hold for any (balanced) model with the interaction network as in Figure~\ref{fig:local_int} without loss of generality. The main ingredient of Definition~\ref{Def: local} is the overlap assumption. This is what is meant by \emph{local} interference. 

In Section~\ref{section: identification of TACE}, we show that that certain semi-parametric assumptions allow the TACE to be computed in interaction models with local interference by appropriately comparing treated and untreated units that experience the same levels of interference and confounding. First, we provide some examples showing how to specify local interference signatures using domain knowledge in real-world settings.

\subsection{Specifying the local interference signature} \label{section: specifying the local interference signature}

In this Section, we discuss specifying local interference signatures in two of the examples from the introduction: vaccinations and student test scores.

Consider the case of vaccinations. It is reasonable to assume that an individual is mainly impacted by the vaccination status of their immediate household. In this case the local interference signature could correspond to the fraction of an individual's household that are vaccinated. This is a function of other unit's treatment assignments---the treatment status of the rest of the household---as well as the value of one of that individual's covariates---the number of members in their household.  


In the case of student test scores, it is reasonable to assume that students will most likely be influenced by their close friends in the class, rather than all class-mates. Here, the local interference signature could correspond to the list of a student's close friends, the teachers those friends have, as well as the covariates of those friends relating to their prior educational aptitude and previous test scores. This is a function of both the student's treatment assignment as well as that of their close friends, and also certain covariate values of the student and their friends.

We discuss local interference signatures further for the experiments in Section~\ref{section: experiments}.

\subsection{Identifying the TACE in models with additive local interference}\label{section: identification of TACE}

Consider the following semi-parametric form for an interaction model with local interference:
\begin{equation}
Y_i = f(T_i, X_i) + g(I_i, X_i) + \epsilon_i, \text{ with } \epsilon_i \sim \mathcal{N}(0, \sigma).
\end{equation}
The treatment and covariates, and the interference signature and covariates can separately interact in an arbitrary, non-linear fashion to give rise to the outcome. Importantly, in the outcome model, the treatment and interference signature only functionally interact in an additive manner\footnote{The interference and treatment can still interact in a non-linear fashion as long as this behaviour is captured in the interference signature $I_i$. Once $I_i$ is specified, further interactions with $T_i$ in the outcome model must be additive.}. 

In such models, the $\text{TACE}_{TY}$ corresponds to: $ \mathbb{E}\left( f(T=1,X) - f(T=0, X)\right)$. We now show that $\text{TACE}_{TY}$ is identifiable from observations of $\{Y_i, T_i, X_i, I_i\}$.

\begin{thm}\label{theorem: IPW}
In an interaction model with local interference and structural equations
$$Y_i = f(T_i, X_i) + g(I_i, X_i) + \epsilon_i, \text{ with } \epsilon_i \sim \mathcal{N}(0, \sigma)$$    
the True Average Causal Effect is identifiable from observations $\{Y_i, T_i, X_i, I_i\}$ and is given by:
\begin{equation}
\emph{\text{TACE}}_{TY} = \mathbb{E}\left(\frac{\mathbb{I}_{T=1}Y}{P(T \mid X, I)}  - \frac{\mathbb{I}_{T=0}Y}{1 - P(T \mid X, I)} \right)
\end{equation}
where $\mathbb{I}_{T=t}$ is an indicator random variable that takes on the value $1$
if $T=t$ and $0$ otherwise.
\end{thm}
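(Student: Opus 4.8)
The plan is to show directly that the proposed inverse-propensity-weighted functional of the observational distribution of $(Y,T,X,I)$ equals $\mathbb{E}\left(f(1,X)-f(0,X)\right)$, which is the value of $\text{TACE}_{TY}$ established just above the theorem via the default interaction model. The argument is an inverse-propensity-weighting identity, but the decisive feature is the additive separation of $f(T,X)$ and $g(I,X)$ in the outcome equation, which will make the interference contribution cancel rather than needing to be estimated.

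First I would treat the treated arm. Applying the tower property while conditioning on $(X,I)$ gives
\[
\mathbb{E}\!\left(\frac{\mathbb{I}_{T=1}Y}{P(T\mid X,I)}\right)
= \mathbb{E}\!\left(\frac{1}{P(T=1\mid X,I)}\,\mathbb{E}\big(\mathbb{I}_{T=1}Y \mid X,I\big)\right),
\]
and then substitute $\mathbb{E}\big(\mathbb{I}_{T=1}Y\mid X,I\big)=P(T=1\mid X,I)\,\mathbb{E}(Y\mid T=1,X,I)$. The overlap condition $0<P(T\mid X,I)<1$ from Definition~\ref{Def: local} guarantees a strictly positive denominator, so the estimand is well defined and this cancellation is legitimate.

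The next step evaluates the inner conditional mean using the structural equation. Since $Y_i=f(T_i,X_i)+g(I_i,X_i)+\epsilon_i$ with $\epsilon_i\sim\mathcal{N}(0,\sigma)$ drawn i.i.d.\ and hence independent of $(T_i,X_i,I_i)$, we obtain $\mathbb{E}(Y\mid T=1,X,I)=f(1,X)+g(I,X)$. Here I would emphasise that independence of the noise from $(T,X,I)$---which follows from the i.i.d.\ latent-noise convention and the stated absence of unobserved confounding---is exactly what prevents collider-type bias from conditioning on $I$, even though $I$ is a descendant of $T$ through $q$. Substituting back gives $\mathbb{E}\!\left(\mathbb{I}_{T=1}Y / P(T\mid X,I)\right)=\mathbb{E}\!\left(f(1,X)+g(I,X)\right)$, with the outer expectation over the marginal law of $(X,I)$. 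The identical computation on the control arm yields $\mathbb{E}\!\left(\mathbb{I}_{T=0}Y / (1-P(T\mid X,I))\right)=\mathbb{E}\!\left(f(0,X)+g(I,X)\right)$ over the \emph{same} marginal law of $(X,I)$.

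Subtracting the two arms, the common term $\mathbb{E}\left(g(I,X)\right)$ cancels exactly---this is the crux of the proof and the precise point where the additive, non-interacting form of the outcome model is used---leaving $\mathbb{E}\left(f(1,X)-f(0,X)\right)=\text{TACE}_{TY}$. The main obstacle is not the algebra but justifying the two modelling facts that license it: that the noise is mean-zero and independent of $(T,X,I)$, so that conditioning on the post-treatment mediator $I$ is harmless, and that $g(I,X)$ enters both arms over the identical marginal distribution of $(X,I)$, so that it genuinely drops out. Both follow from the balanced-model and i.i.d.-noise assumptions built into Definition~\ref{Def: local}, and it is worth noting in the write-up that without the additive restriction this cancellation would fail.
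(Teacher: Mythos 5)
Your proposal is correct and follows essentially the same route as the paper: the tower-property/Bayes manipulation turning each inverse-propensity-weighted arm into $\mathbb{E}\bigl(\mathbb{E}(Y\mid T=t,X,I)\bigr)$ is exactly the derivation the paper relegates to its appendix, and the cancellation of $g(I,X)$ under the additive outcome model is the paper's key step as well. Your additional remarks on why the i.i.d.\ noise assumption licenses evaluating $\mathbb{E}(Y\mid T,X,I)$ from the structural equation are a sound elaboration of a step the paper leaves implicit, but they do not change the argument.
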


\begin{proof}
\begin{align*}
&\mathbb{E}\left(\frac{\mathbb{I}_{T=1}Y}{P(T \mid X, I)}  - \frac{\mathbb{I}_{T=0}Y}{1 - P(T \mid X, I)} \right) \\
&= \mathbb{E}\big(\mathbb{E}\left(Y \mid T=1, X, I \right) - \mathbb{E}\left(Y \mid T=0, X, I \right)\big) \\
&= \mathbb{E}\big( f(T=1, X) + g(I,X) - f(T=0, X) - g(I,X)\big)\\
&=\mathbb{E}\big( f(T=1, X) - f(T=0, X)\big) = \text{TACE}_{TY}.
\end{align*}
The first equality follows from Bayes' rule and the definition of expectations. See Appendix~\ref{section: proof1} for this derivation.
\end{proof}

The expression for $\text{TACE}_{TY}$ in Theorem~\ref{theorem: IPW} is well-defined due to the overlap assumption from Definition~\ref{Def: local}. This expression shows that once the interference signature has been specified for a given problem, interference bias can be removed in estimates of the causal effect by treating the interference signature in a similar manner to a confounder, and adjusting for it accordingly. 

\subsection{Non-identifiability of the TACE in models with non-additive local interference}

Can we identify $\text{TACE}_{TY}$ in more general models than the additive ones explored in the previous Section? We now show that the answer is in general no---showing that semi-parametric assumptions are required for identification. Recall that a quantity is identifiable from a specific type of data if every model that agrees on that data produces the same value for the quantity. Hence, if two models agree on the data, but not on the quantity, then it is not identifiable from that data.

\begin{thm}
The True Average Causal Effect is \textbf{not} identifiable \textbf{in general} from observations of $\{Y_i, T_i, X_i, I_i\}$ in interaction models with local interference when the structural equation for $Y_i$ is not additive: $$Y_i \neq f(T_i, X_i) + g(I_i, X_i) + \epsilon_i, \text{ with } \epsilon_i \sim \mathcal{N}(0,\sigma).$$
Note that this does not imply that the True Average Causal Effect is \textbf{never} identifiable in interaction models with non-additive local interference.
\end{thm}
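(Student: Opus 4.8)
The plan is to prove non-identifiability by exhibiting two interaction models with local interference that induce the \emph{same} observational distribution over $\{Y_i, T_i, X_i, I_i\}$, are both non-additive in the outcome equation, yet disagree on the value of $\text{TACE}_{TY}$. By the definition of identifiability recalled just above the statement, producing such a pair suffices.

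First I would isolate the mechanism that makes the additive case of Theorem~\ref{theorem: IPW} work, and locate where it breaks. Writing the general outcome equation as $Y_i = h(T_i, X_i, I_i) + \epsilon_i$ with mean-zero Gaussian noise, the observed data identify $h$ only on the support of $P(T, X, I)$, since $\mathbb{E}(Y \mid T=t, X=x, I=i) = h(t,x,i)$ is only defined there. The default model, by contrast, removes interference by fixing $I_i$ to its default value $I_{\mathrm{def}}$, so $\text{TACE}_{TY} = \mathbb{E}_X\big(h(1, X, I_{\mathrm{def}}) - h(0, X, I_{\mathrm{def}})\big)$. In the additive case $h(t,x,i) = f(t,x) + g(i,x)$ the interference term cancels in the conditional contrast $h(1,x,i) - h(0,x,i) = f(1,x) - f(0,x)$, which is constant in $i$ and therefore extrapolates trivially from the observed support to $I_{\mathrm{def}}$. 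When $h$ is non-additive this contrast genuinely depends on $i$, so its value at $I_{\mathrm{def}}$ is not pinned down by its values on the observed support.

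This suggests the construction. I would choose a setting in which no unit ever attains the default (no-interference) signature---that is, $I_{\mathrm{def}}$ lies outside the support of the observed $I$---which is natural whenever every unit experiences some interference. Concretely, take $X$ degenerate (and suppressed) and $I$ supported on two non-default values, say $I \in \{1,2\}$ with $I_{\mathrm{def}} = 0$, under a treatment mechanism giving $0 < P(T=1 \mid I) < 1$ so that overlap holds. I would then define two outcome functions $h_1, h_2$ that \emph{agree} for all $i \in \{1,2\}$ but \emph{differ} at $i = 0$, with the treatment contrast $h_k(1,i) - h_k(0,i)$ varying in $i$ for each $k$ so that both models are non-additive; for instance one can set $h_k(0,i) \equiv 0$ and let $h_1(1,\cdot), h_2(1,\cdot)$ coincide on $\{1,2\}$ but take different values at $0$. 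Since the data never reach $i=0$ and the two functions coincide elsewhere, the induced distributions of $\{Y,T,X,I\}$ are identical, while $\text{TACE}_{TY} = h_k(1,0) - h_k(0,0)$ differs between the two models.

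The remaining work, and the main obstacle, is verifying that both constructions are \emph{bona fide} interaction models with local interference in the sense of Definition~\ref{Def: local}, not merely arbitrary conditional distributions. For each model I would need to exhibit a consistent multi-unit data-generating process---latent noise drawn i.i.d., a propensity equation $T_i = l(X_i, \sigma_{T_i})$, and an interference equation $I_i = q(T_i, X_i, T_{j \neq i}, X_{j \neq i}, \sigma_{I_i})$---that realises the prescribed support $I \in \{1,2\}$ while leaving $I_{\mathrm{def}}$ unattained, and then confirm balancedness and the overlap inequality. Finally I would emphasise, matching the remark in the statement, that this argument fails precisely when $I_{\mathrm{def}}$ \emph{is} in the support, so that the contrast is directly observed at the default value; this is exactly why non-identifiability holds only \emph{in general} rather than universally.
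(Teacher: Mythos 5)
Your proposal is correct and takes essentially the same route as the paper: a two-model counterexample in which the outcome functions agree on the observed support of $(T,X,I)$ but disagree at the default (no-interference) value of $I$, so the models match on the data yet differ on $\text{TACE}_{TY}$ (the paper instantiates this with $Y_i=\alpha T_iI_i+X_i$ versus $Y_i=\alpha'T_iI_i+\beta T_i+X_i$ tuned so that $\alpha I_i=\alpha'I_i+\beta$ on the observed data while the default $I=1$ is unattained). Your version is a slightly more abstract formulation of the same mechanism, and correctly flags the (routine) verification that the construction can be realised as a bona fide balanced interaction model with overlap, a step the paper also leaves implicit.
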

\begin{proof}
By the definition of identifiability, all we need to do to prove this Theorem is to find two non-additive models that agree on the observed data, but not on $\text{TACE}_{TY}$. Consider these models:
$$\text{Model 1: }Y_i = \alpha T_i\,I_i + X_i, \qquad \text{Model 2: } Y_i = {\alpha}' T_i\,I_i + \beta T_i + X_i,$$
where both models have the same data generating mechanisms for $T_i$, $X_i$, and $I_i$. If we set $\alpha I_i = {\alpha}'I_i + \beta$, then both models have the same joint observational distribution over $\{Y_i, T_i, X_i, I_i\}$, and thus agree on this data. The $\text{TACE}_{TY}$ in the first model is $\alpha$ and the $\text{TACE}_{TY}$ in the second model is $\alpha' + \beta$. As long as $I_i$ does not equal $1$ for at least one unit, then these models have different $\text{TACE}_{TY}$. That is, they agree on the data, but not on the causal estimand, hence one cannot identify the $\text{TACE}_{TY}$ from observational data alone.
\end{proof}

This is not to say that non-additive functional interactions between $I_i$ and $T_i$ in the outcome model always preclude the identification of causal quantities. 
For example, in product outcome models with local interference $$Y_i = f(T_i, X_i)g(I_i, X_i) + \epsilon_i, \text{ with } \epsilon_i \sim \mathcal{N}(0, \sigma),$$ the True Average Causal Risk Ratio corresponds to  $\text{TACRR}_{TY}=\mathbb{E}\left(f(T=1, X)/f(T=0, X)\right)$. In a similar manner to the additive case, we can estimate TACRR$_{TY}$ by conditioning on $X_i$ and $I_i$, matching treated and untreated units in each segment.
\begin{equation*}
\text{TACRR}_{TY} = \mathbb{E}\left(\frac{f(T=1,X)}{f(T=0,X)}\right)
= \mathbb{E}\left(\frac{f(T=1,X)g(I,X)}{f(T=0,X)g(I,X)}\right)
= \mathbb{E}\left(\frac{\mathbb{E}(Y\mid T=1,X,I)}{\mathbb{E}(Y\mid T=0,X,I)}\right)
\end{equation*}

%
%

\section{Related work}

Interference was first formally defined by Cox in his 1958 work \citep{cox1958planning}.  Handling interference is non-trivial, and most works in causal inference assume no interference by invoking SUTVA \citep{rubin1978bayesian}. 
Much work has been done to relax this assumption, facilitating the investigation treatment effects in the presence of interference using the concept of exposure mappings \citep{hudgens2008toward,manski2013identification,aronow2016estimating,eckles2017design,chin2019regression,auerbach2021local,savje2021average,leung2022causal,savje2024causal}.
The main idea is to assume the existence of functions mapping the treatment and network information into an \emph{effective treatment} for each unit, such that the potential outcomes depend only on the treatment assignments via this compressed representation.

Recent years have also seen a rise in the use of graphical tools to explore and remove the bias introduced by interference \citep{ogburn2014causal, shpitser2017modeling,sherman2018identification}. These works rely on a notion of \textit{partial interference} which divides units into equal-sized blocks and assumes that interference can only occur within a block but not across different blocks (see also \citealt{liu2014large}). 
More recent works by \cite{zhang2022causal, zhang2023causal} (see also \citealt{zhang2023thesis}),  \cite{yu2022estimating}, \cite{spohn2023graphical}, and \cite{ogburn2024causal} relax these assumptions. 
Our work closely follows \cite{zhang2022causal} and \cite{spohn2023graphical}, aiming to overcome specific limitations of these works, and explores how to remove interference bias when interference effects are confounded, can depend on a unit's covariates, and when structural equations aren't linear.

The approach of \cite{zhang2022causal} requires strong graphical knowledge to remove bias. In particular, they require the existence of units in the interaction network whose outcomes are not impacted by other unit's treatment assignment. In a follow-on work, \cite{zhang2023causal} relax the strict graphical requirements, and replace them with uncertainty estimates of the graphical structure. While this reduces the type of domain knowledge required, assigning a quantitative level of uncertainty over graphical interaction networks nevertheless requires a reasonable degree domain knowledge. Additionally, they only consider linear structural causal models. 

The approach of \cite{spohn2023graphical} also restricts to linear structural causal models. 
Although this strict linear assumption does in principle allow for identification of the TACE, the approach is actually designed around identification of a different causal quantity, namely the Global Average Treatment Effect (GATE).
Moreover, the model does not allow for interference effects to depend on a unit's covariates, nor does it allow the interference effects from other units to be confounded.

\section{Experiments} \label{section: experiments}
\subsection{Simulated data} \label{subsec: sim data}

In this Section, we demonstrate our method on data sampled from an additive outcome model with local interference. 
We construct 10,000 simulated datasets according to the data generating process outlined below. Each dataset consists of 110,000 units, with a True Average Causal Effect of $\text{TACE}_{TY}=1$. 
The variable $X$ determines each unit's baseline outcome $Y$, and baseline probability of treatment $T$.
Units are randomly\footnote{The notation $V_i\in_R S$ indicates that variable $V$ is sampled uniformly at random from the set $S$.}
assigned to one of 10,000 contexts in $L$.
These contexts are analogous to a unit's local neighbourhood, impacting both the unit's treatment status, and also the level of interference they experience. 
For example, this local context could correspond to a person's household or town, or in the case of online advertising the local context could label distinct product categories such that items in the same category could interference with each other. 
$T$ is sampled as a Bernoulli trial with probability set by $X$ and the local context $L$. Note that units interfere within their own local context such that units with higher proportions of other treated units the same context are more likely to be impacted by interference.
\begin{gather*}
X_i\in_R\{0.1, 0.2, 0.3, 0.4\},\qquad
L_i\in_R\{1,\ldots,10000\},\qquad
U_{L_i}\in_R\{0.1, 0.2, 0.3, 0.4\},\\
T_i\sim \mathcal{B}\left(X_i+U_{L_i}\right),\quad
I_i=\frac{\sum_{j\neq i, L_j=L_i}T_j}{\sum_{j\neq i,L_j=L_i} 1},\quad
W_i\sim\mathcal{B}\left(I_i\right),\quad
Y_i\sim \mathcal{N}\left(4X_i + T_i + 4W_i, 1\right)
\end{gather*}

\begin{figure}[t] 
\centering
   \includegraphics[scale=0.7]{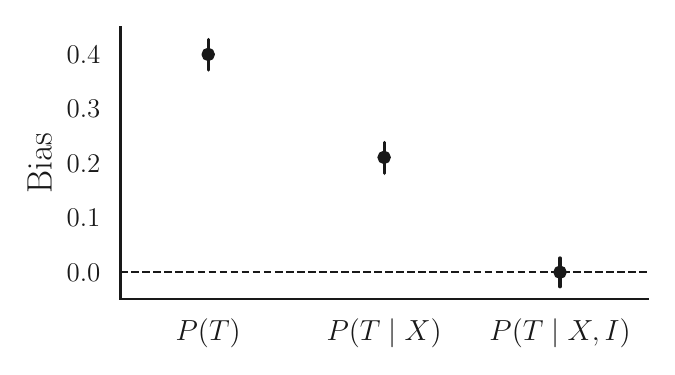}
      \caption{Bias in TACE$_{TY}$ estimates for the simulated data described in Section~\ref{subsec: sim data}. We show the bias when TACE$_{TY}$ is estimated without adjustment (IPW with $P(T)$), adjusting only for $X$ (IPW with $P(T\mid X)$), and adjusting for both $X$ and $I$ (IPW with $P(T\mid X, I)$).
      The error bars show the intervals between the 2.5\% and 97.5\% percentiles across 10,000 simulated datasets.
       }\label{fig:toy_model_bias}
\end{figure}

In Figure~\ref{fig:toy_model_bias}, we show the bias when $\text{TACE}_{TY}$ is estimated without adjustment, adjusting only for the confounders $X$, and adjusting for both $X$ and $I$. In Appendix~\ref{section: more sims}, we show a modified version of this example where the interference signature also depends on $T_i$.

\subsection{Semi-synthetic data} \label{subsec: real data}
\begin{figure}[t] 
\centering
   \includegraphics[scale=0.58]{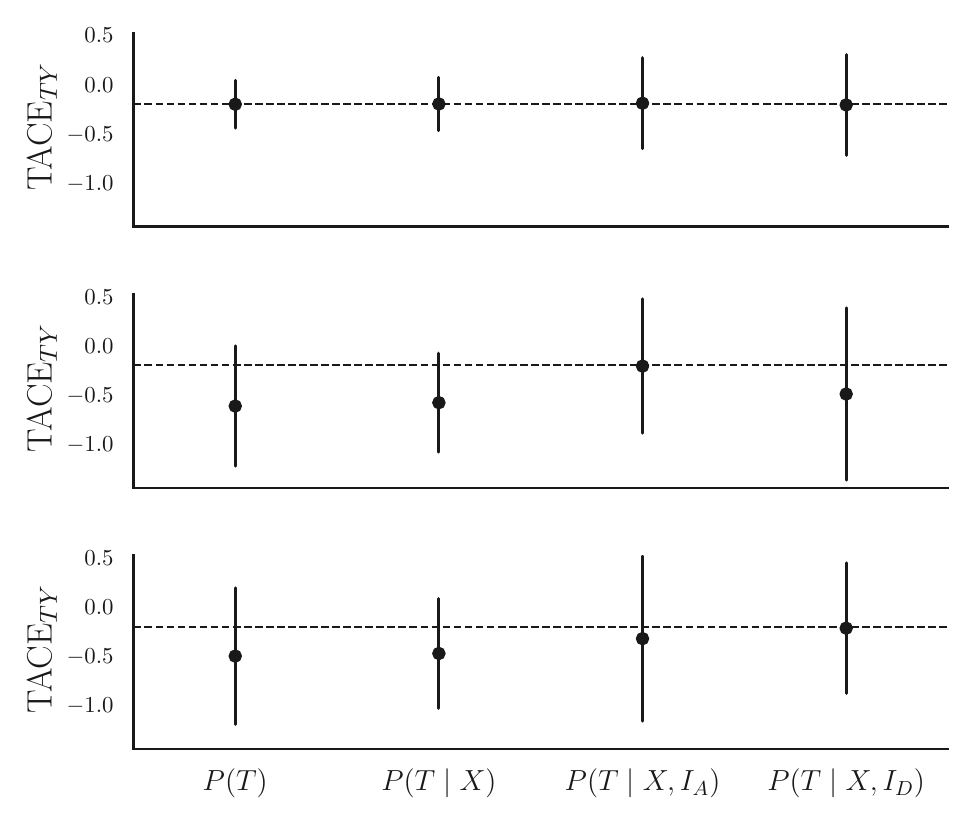}
      \caption{Estimated $\text{TACE}_{TY}$ of restrictive policy adoption in the real-world data from Section~\ref{subsec: real data}.
      The top panel shows the estimates on the unmodified data. The horizontal dashed line is centered on the $P(T\mid X)$ estimate for comparison with other panels. The middle panel shows estimates on a semi-synthetic variant where we introduce inference between \emph{adjacent} cantons, while the bottom panel shows estimates where we introduce interference inversely proportional to the squared \emph{distance} between canton capitals.
      Conditioning on the correctly specified interference signature recovers the effect estimate from the unmodified data, while conditioning on the misspecified signature (eg. $I_A$ instead of $I_D$ or vice versa) still reduces interference bias, but does not completely remove it.
      The error bars show the 95\% bootstrap confidence intervals calculated by resampling cantons, with the variance inflated as described in Appendix~\ref{section: real data appendix}. 
       }\label{fig:covid_tace}
\end{figure}
In this Section, we demonstrate our method on real-world data. We apply our model to estimate the TACE of restrictive policy adoption on the spread of COVID-19 in Switzerland between July 2020 and December 2020. 
Early in the pandemic, the cantons of Switzerland could choose to adopt more restrictive policies than the government-mandated baselines. As people commute between cantons, the policy adoption in one canton could impact the spread of COVID-19 in others---a potential source of interference bias. 
The data and causal assumptions are adapted from \cite{nussli2024effect}\footnote{\url{https://github.com/enussl/Facial-Mask-Policy-COVID-19}}, and \cite{spohn2023graphical}\footnote{\url{https://github.com/henckell/InterferenceCode}}. 
These works estimate various causal quantities corresponding to the effects of introducing a strict facial-mask policy (ie.\ more restrictive than the government-mandated baseline) on the spread of COVID-19 in Switzerland between July 2020 and December 2020. 

To isolate the effect of the facial-mask policy, the authors control for additional policies including closing of workplaces, restrictions on gatherings, and cancellations of public events.
However, adoption of the facial-mask policy is highly correlated with these other policies, resulting in severe overlap violations. Therefore, isolating the effect of the facial-mask policy is only possible with strict parametric assumptions. 
Instead, we estimate the combined effect of these policies by aggregating them into a single binary treatment, where $T_i=1$ in canton $i$ if the canton adopted at least one of the following (a) strict facial-mask policy, (b) required work closures for some sectors, (c) restrictions on gatherings of more than 100 people, or (d) required cancellations of public events. 
We estimate the TACE of restrictive policy adoption on the weekly case growth rate.

Following \cite{spohn2023graphical}, we consider the following confounders: canton population, \% people of age 80+ in a canton, people per km$^2$ in a canton, public school holidays,
and information about the pandemic available to the public in week $t$, given by the lagged
case growth rate at week $t-2$. 
Furthermore, we consider two potential interference signatures: $I_A$, and $I_D$.
$I_A$ is defined in a similar manner to the ``interference feature'' of \cite{spohn2023graphical}. For each canton, $I_A$ is calculated as the average treatment status of \emph{adjacent} cantons. 
$I_D$ is based on the \emph{distance} between cantons, and for each canton is calculated as a weighted average of the treatment status of other cantons, weighted by the inverse squared distance between canton capital cities. 
This second example shows how the interference signature allows for a richer range of interference behaviours than can be captured by the interaction network structure and treatment status alone. 

In Figure~\ref{fig:covid_tace}, we show the estimated $\text{TACE}_{TY}$ of restrictive policy adoption on the weekly case growth rate. 
We adjust for an interference signature $I_A$ defined based on the treatment status of \emph{adjacent} cantons, and also for $I_D$ based on treatment status and \emph{distance} between canton capitals. 
On the unmodified data (top panel), we find no evidence of significant interference bias, and so we also construct two semi-synthetic variants where we artificially introduce interference between cantons, which we describe below.

The interference signature $I_A$ is defined as the fraction of adjacent cantons that are treated. For a given canton, we further multiply this by the population density of that canton, such that cantons with larger population density, and with larger fraction of adjacent treated cantons, experience higher levels of interference. In the outcome model, the introduced synthetic interference is therefore a non-linear (multiplicative) combination of the interference signature and canton covariates, which is added to the original outcome --- recall that this additive interference is a requirement for identifiability in our semi-parametric model.
We introduce synthetic interference for the $I_D$ case in a similar manner, such that cantons with larger population density, larger fraction of adjacent treated cantons, and with capital cities close to other treated capital cities, experience higher levels of interference. This demonstrates how, in our framework, covariates of other units --- in this case distance between units --- can play a role in the interference experienced by a given unit.

How is it that the outcome model of \cite{spohn2023graphical} (see their Eq. 4) allows for an interaction term that is a product between the treatment and interference, whereas our outcome model must restrict to additivity between the treatment and interference?
Firstly, it should be noted that \cite{spohn2023graphical} aim to identify the Global Average Treatment Effect (GATE), which corresponds to the sum of the TACE and interference effects, and so for their purposes it's not strictly necessary to identify the TACE. 
Secondly, the TACE can be identified in linear interference models with product interaction terms in the outcome model if either (a) there exist treated and untreated units who experience zero interference (as assumed by \citealt{zhang2022causal}), or (b) the linear model correctly extrapolates beyond the support of the data to this zero-interference setting (as assumed by \citealt{spohn2023graphical}). 
Finally, as discussed in Section~\ref{section: identification of TACE}, our model does in fact allow for non-additive interactions between the treatment and interference, provided these are captured by the interference signature such that overlap holds.

%
%

\section{Conclusion}
The ability to answer causal questions is vital for addressing real-world decision-making problems \cite{richens2020improving, fawkes2025hardness, vlontzos2023estimating, zeitler2023non, jeunen2022disentangling, van2023estimating, andreu2024contrastive}. However, the existence of interference bias is a barrier to answering causal questions. We provided and justified a novel definition of causal models with \emph{local interference}. We proved that the TACE can be identified in certain semi-parametric models satisfying this definition, where interference effects are additive in the outcome model, but the interaction of interference and a unit's covariates could be non-linear. An analytic estimand for the TACE was given in such settings. We further proved that the TACE cannot be identified in arbitrary models with local interference, showing that identification requires semi-parametric assumptions. However, we demonstrated that other causal quantities, such as the TACRR, can be identified in certain models where the semi-parametric additivity requirement on the outcome model fails. Finally, we provided an empirical validation of our method on both simulated and real-world datasets.

\bibliography{main_clear_2025}

\section*{Appendix}
\appendix

\section{Proof of Theorem~\ref{theorem: IPW}}
\label{section: proof1}
In this section, we write out in full the steps involved in the first line of the proof of Theorem~\ref{theorem: IPW}.
$$
\begin{aligned}
\mathbb{E}( \mathbb{E}(Y| T=1, X, I)) &= \mathbb{E}\left( \sum_Y YP(Y | T=1, X, I)\right) \\
&=\sum_{X,I}\sum_Y YP(Y | T=1, X, I)P(X,I) \\
&=\sum_{X,I}\sum_Y YP(Y | T=1, X, I)P(X,I) \frac{P(T=1|X,I)}{P(T=1|X,I)} \\
&=\sum_{X,I}\sum_Y  \frac{YP(Y, T=1, X, I)}{P(T=1|X,I)} \\
&= \sum_{X,I} \frac{\mathbb{E}(\mathbb{I}_{T=1, X, I}Y)}{P(T=1|X,I)} \\
&= \mathbb{E} \left(\frac{\sum_{X,I}\mathbb{I}_{T=1, X, I}Y}{P(T=1|X,I)}\right) \\
&= \mathbb{E} \left(\frac{\mathbb{I}_{T=1}Y}{P(T=1|X,I)} \right)\\
\end{aligned}
$$

\section{Additional Simulated Data Example}
\label{section: more sims}
In this section, we show a simulated example where the interference signature depends on $T_i$. We generate 10,000 datasets, each with 110,000 units, according to the data generating process outlined below. This is a modified version of the example in Section~\ref{subsec: sim data}. Note that the interference signature for unit $i$ depends on that unit's treatment status. In particular, units who are treated experience lower levels of interference than if they were untreated. In Figure~\ref{fig:toy_model_bias2}, we show the bias when $\text{TACE}_{TY}$ is estimated for this dataset without adjustment, adjusting only for the confounders $X$, and adjusting for both $X$ and $I$.
\begin{gather*}
X_i\in_R\{0.1, 0.2, 0.3, 0.4\},\qquad
L_i\in_R\{1,\ldots,10000\},\qquad
U_{L_i}\in_R\{0.1, 0.2, 0.3, 0.4\},\\
T_i\sim \mathcal{B}\left(X_i+U_{L_i}\right),\quad
\widetilde{I}_i=\frac{\sum_{j\neq i, L_j=L_i}T_j}{\sum_{j\neq i,L_j=L_i} 1},\quad
I_i=\max\left(0,\,\,\widetilde{I}_i - 0.1\,T_i\right),\quad
W_i\sim\mathcal{B}\left(I_i\right),\\
Y_i\sim \mathcal{N}\left(4X_i + T_i + 4W_i, 1\right)
\end{gather*}

This example also highlights the crucial role of the overlap condition in Def~\ref{Def: local}. Note that in the data generating process above, there can be no treated units with interference signature $I_i>0.9$. Therefore, the overlap condition is violated where $I>0.9$ with $P\left(T=1\mid X, I>0.9\right)=0$, and so we must exclude these units when estimating $\text{TACE}_{TY}$. In this case, we still recover an unbiased estimate of $\text{TACE}_{TY}$ for the full population because the treatment effect is constant. However, as in all applications of causal inference, we must rely on domain knowledge to assess the relevance of causal quantities estimated on subsets of the population. 

\begin{figure}[h] 
\centering
   \includegraphics[scale=0.7]{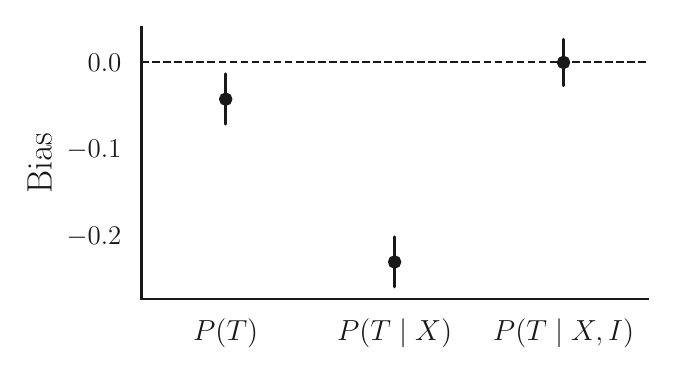}
      \caption{Bias in TACE$_{TY}$ estimates for the simulated data described in Appendix~\ref{section: more sims}. We show the bias when TACE$_{TY}$ is estimated without adjustment (IPW with $P(T)$), adjusting only for $X$ (IPW with $P(T\mid X)$), and adjusting for both $X$ and $I$ (IPW with $P(T\mid X, I)$).
      The error bars show the intervals between the 2.5\% and 97.5\% percentiles across 10,000 simulated datasets.
       }\label{fig:toy_model_bias2}
\end{figure}

\section{Variance Inflation Factor}
\label{section: real data appendix}
In this paper we focus on the identifiability of $\text{TACE}_{TY}$, and an investigation of procedures for estimation of confidence intervals is beyond the scope of the current work. 
However, conventional estimates of confidence intervals are likely to be overly optimistic in the presence of interference, and so in Section~\ref{subsec: real data} we employ a straightforward variance inflation procedure outlined by \cite{savje2021average}. 
This procedure relies on knowledge of the interaction network, and aims to construct \emph{conservative} confidence intervals for the IPW estimator in the presence of interference.

As discussed in Section~\ref{subsec: real data}, we assume that cantons interfere with adjacent cantons. Let $A_{ij}$ be the $N\times N$ adjacency matrix indicating which of the $N$ cantons are adjacent. For convenience, we set the diagonal elements to $A_{ii}=1$. Following \cite{savje2021average}, we define the number of interference dependencies for canton $i$ as
\begin{equation*}
D_i=\sum_{j=1}^N D_{ij}\qquad\text{where}\qquad D_{ij}=
\begin{cases}
1 & \text{if } A_{ki}\,A_{kj}=1 \text{ for any } k\in\{1,...,N\} \\
0 & \text{otherwise}
\end{cases}
\end{equation*}
The interference dependence indicator $D_{ij}$ captures whether cantons $i$ and $j$ are impacted by a common treatment --- either directly interfering with each other, or jointly interfering with a third canton. 
Note that in the absence of interference we would have $D_i=1$ for all cantons.
These quantities measure the deviation of an experiment from the zero interference case, and \cite{savje2021average} discuss how summaries of these can serve as adequate variance inflation factors. In particular, they consider the unit average dependence $D_\text{avg}=\frac{1}{N}\sum_{i=1}^N D_i$, the unit maximum dependence $D_\text{max}=\max_i D_i$, and the spectral radius of $D_{ij}$, which corresponds to the maximum of the absolute value of its eigenvalues, and falls between these $D_\text{avg}\leq D_\text{sr} \leq D_\text{max}$.
\cite{savje2021average} argue that $D_\text{avg}$ is generally overly optimistic, while $D_\text{max}$ is generally overly conservative, and therefore favour $D_\text{sr}$ as a compromise between these. In Figure~\ref{fig:covid_tace} we show the 95\% bootstrap confidence intervals, with the estimated variance inflated as $\text{Var}_\text{sr} = D_\text{sr} \,\text{Var}_\text{bootstrap}$. 
In this specific application, the confidence intervals are very wide due to large variance inflation. The inflation factor is quite large because many units are either adjacent or share a common adjacent unit. The variance inflation would not be as extreme in similar applications with sparser adjacency graphs.

\end{document}